\newtheorem{theorem}{Theorem}
\newtheorem*{theorem*}{Theorem}
\newtheorem{lemma}[theorem]{Lemma}
\newtheorem{definition}[theorem]{Definition}
\def\cR{\mathcal{R}}
\def\fB{\mathfrak{B}}
\title{Stabilizer codes of less than two dimensions have constant distance}
\author[1]{Nou\'edyn Baspin}
\affil[1]{Centre for Engineered Quantum Systems, School of Physics, University of Sydney, Sydney, NSW 2006, Australia}
\begin{document}
\maketitle

\begin{abstract}
   The surface code is a two-dimensional stabiliser code with parameters $[[n,1,\Theta(\sqrt{n})]]$. To this day, no stabiliser code with growing distance is know to live in less than two dimensions. In this note we show that no such code can exist. 
\end{abstract}

% \setcounter{tocdepth}{2}
% \tableofcontents

\section{Introduction}

The information contained in a bit corresponds to a simple bit-flip: the bit is either up, or down. In a repetition code -- an elementary classical code -- this information is encoded under the form of a long, one-dimensional, `string-like' logical operator.

In contrast, a qubit holds two pieces of information: a bit-flip, and a phase-flip. As might then be expected, the surface code \cite{bravyi1998quantum} -- a quantum equivalent of the repetition code -- exhibits two string-like operators, one for each of bit-flips and phase-flips. 
Notably, each of these string operators occupies a different dimension, which `forces' the surface code to occupy two dimensions.

The correlation between dimensionality and number of logical operators is striking. One could hope that a smarter encoding could improve over using orthogonal string operators, and a code of lower dimension could be invented. Unfortunately, in the 25 years that have elapsed since the conception of the surface code, no code family of dimension less than two has been discovered. 

In this work we resolve this question by proving that quantum stabiliser codes need at least two dimensions for their distance to grow. 

\begin{theorem}[Informal statement of Theorem \ref{thm:main-thm}]
    Let $\cC$ be a stabilizer code with connectivity graph $G$. If $G$ has Assouad dimension at most $\beta<2$, then the distance $d$ of $\cC$ obeys $d \leq O(1)$.
\end{theorem}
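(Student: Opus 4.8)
My plan is to drive the argument with the Bravyi--Terhal cleaning lemma, fed by the covering estimates that the Assouad hypothesis forces on $G$. Write $w$ for the interaction range of $\cC$, i.e.\ the largest diameter (in $G$) of a stabiliser generator. Two consequences of cleaning are the engine. First, any set $R$ of qubits with $\card{R} < d$ is correctable, hence every logical can be multiplied by a stabiliser so as to avoid $R$. Second, if a width-$w$ shell $M$ disconnects the qubits of $G$ into an interior $I$ and an exterior $O$, and $M$ is correctable, then any logical $\bar Z$ admits an equivalent representative $\bar Z' = \bar Z'_I\,\bar Z'_O$ with $\operatorname{supp}(\bar Z'_I)\subseteq I$ and $\operatorname{supp}(\bar Z'_O)\subseteq O$; since no generator can straddle a width-$w$ separator, each factor commutes with the whole stabiliser group and is therefore itself logical or stabiliser. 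Applying the second fact to a minimum-weight logical shows at once that its support cannot be split into two nonempty pieces separated by a correctable shell --- otherwise one piece would be a logical of weight strictly below $d$ --- so the support $W$ of a minimal $\bar Z$ is $w$-connected. In particular $W$ is a single connected blob of $d$ qubits.

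I would then mine the geometry. Fixing a base qubit $v\in W$ and writing $B_r$ for the ball of radius $r$ about $v$, the Assouad bound gives $\card{B_r}\le C\,r^{\beta}$ uniformly, so slicing a dyadic annulus into $\Theta(r/w)$ disjoint width-$w$ shells and averaging produces, in every range $[r,2r]$, a separating shell $M$ with $\card{M}=O\!\left(w^{\beta-1}r^{\beta-1}\right)$. The crucial arithmetic is that $\beta<2$ makes $\card{M}=O(r^{\beta-1})$ grow strictly slower than the radius, and in fact stay below $d$ for every $r$ up to $\sim d^{1/(\beta-1)}$, a scale that exceeds the diameter $O(wd)$ of $W$ itself. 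Thus every shell that meets $W$ on both sides is correctable, and cleaning lets me attempt to push the logical strictly to one side of it. The target is a descent that confines an equivalent logical to balls of shrinking radius, terminating in a ball of constant size, whose volume $O(1)$ would then upper-bound $d$.

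The delicate point --- and the step I expect to be the true obstacle --- is that cleaning $\bar Z$ off a shell $M$ is permitted to multiply by a stabiliser supported far from $M$, and so can \emph{reroute} the logical around the enclosed ball rather than confine it. This is exactly what happens for the surface code at $\beta=2$: a string logical crossing a small sphere is cleaned not by retracting but by detouring through a disc of plaquettes, so no confinement occurs and the distance is free to grow. The whole theorem therefore hinges on showing that when $\beta<2$ no such cheap detour exists: a correctable separating shell must genuinely disconnect the logical, forcing the confinement above. I would formalise this through a hierarchical (laminar) decomposition of a neighbourhood of $W$ by nested correctable shells, organised as a rooted tree (with the usual $\parent$, $\children$, and $\desc$ relations), and a congestion quantity $K$ controlling how much support a single qubit can accumulate as the logical is cleaned level by level down the tree. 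Converting the dimension gap $2-\beta>0$ into a summable bound on this congestion --- so that the total rerouting cost stays $o(d)$ and contradicts $\operatorname{wt}(\bar Z)=d$ --- is where the quantitative heart of the proof lies, and where the argument must, and does, break at $\beta=2$.
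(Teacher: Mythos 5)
There is a genuine gap, and it is exactly where you predicted it would be: the confinement step is named but never carried out. Your first two paragraphs (minimal logicals are $w$-connected; Assouad growth $\beta<2$ yields separating shells of size $O(r^{\beta-1})$, hence correctable shells at every scale up to well beyond the diameter of the support) are fine, but they only set the stage. The theorem lives or dies on showing that cleaning off a correctable shell \emph{confines} rather than \emph{reroutes} the logical, and your proposal offers no mechanism for this: cleaning multiplies by a stabiliser whose support inside the enclosed ball is completely uncontrolled, so iterating the cleaning down a nested family of shells can inflate the operator's weight at every level, and nothing in the sketch makes the "congestion" $K$ summable. Declaring that the dimension gap $2-\beta>0$ must make the rerouting cost $o(d)$ is a restatement of the theorem, not a proof of it; as written, every concrete step of your argument would go through verbatim at $\beta=2$, where the conclusion is false.

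It is worth seeing how the paper escapes this trap: it never touches logical operators or cleaning at all. The hard geometric content — the global statement that sub-2D growth leaves no room for detours — is imported wholesale from Le Donne--Rajala (Theorem \ref{thm:assouad-to-nagata}): Assouad dimension $\beta<2$ at a suitable scale implies Nagata dimension at most $1$, i.e.\ $V$ is covered by two collections $\fB_1,\fB_2$ of $2$-separated, $2c$-bounded subsets. If $d$ exceeded the maximal ball size $b_{\max}(2c)$, every subset in each collection would be correctable, the Union Lemma (Lemma \ref{lem:union-lemma}) would make each of the two unions correctable, and the BPT Lemma (Lemma \ref{lem:bpt-lemma}) with $A_3=\emptyset$ would force $k=0$, contradicting $k\geq 1$; the Assouad bound applied at $r=1/2$ then converts $b_{\max}(2c)$ into the explicit constant $C(4c)^\beta$. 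In other words, the decomposition you were trying to build by hand (nested correctable shells plus a congestion bound) is replaced by a two-coloured partition into well-separated constant-size blobs, for which a pure correctability-counting argument suffices. If you want to salvage your route, the missing lemma you would have to prove is essentially equivalent to that covering theorem, so the honest conclusion is that your approach, as it stands, is incomplete precisely at its core.
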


The interaction between locality and quantum error correction has been feeding a sustained interest \cite{portnoy2023local, pattison2023hierarchical, hong2024long, dai2024locality, baspin2021quantifying}. The impetus for this line of research can be summarised as follows: given a quantum computer architecture, how efficiently can it be protected from errors? In particular one hopes to protect as much information as possible from as many errors as doable.

This study was initiated by \cite{bravyi2009no}, and quickly followed by \cite{bravyi2010tradeoffs}, which showed that the geometry of an architecture strongly impacted the efficiency of the codes it could accommodate. Specifically, they obtained bounds on the code's parameters in function of the dimension it lives in. The motivation behind these original papers was to show that the surface code corresponded to the `optimal' 2D code. Our result can be understood as an extension of these original observations: a quantum code with growing distance takes at least as much space as a surface code.

\section{Codes and their graphs}

We begin our preliminaries with a short review of quantum codes. An $n$-qubit quantum code is a subspace $\cC$ of $(\mathbb{C}^2)^n$, and this code is said to encode $k$ qubits if $\cC \cong (\mathbb{C}^2)^k$. 
A subset $E \subset [n]$ of qubits is called {correctable} if there exists a CPTP map $\cR_E$ such that for any state $\rho$ in the support of $\cC$, we have $\cR_E \circ \tr_E (\rho) = \rho$.
The distance $d$ of $\cC$ corresponds to the cardinality of the smallest non-correctable set of qubits. 

We move on to introducing stabilizer codes, as they will constitute our main object of study. Remember that a stabilizer group is an abelian subgroup $\cS$ of the $n$-qubit Pauli matrices $\cP_n = \{\idty, X, Y, Z\}^{\otimes n}$. A code $\cC$ is then said to be a stabilizer code if:
\[
\cC = \{\ket{\psi}, s \ket{\psi} = \ket{\psi}, \forall s \in \cS\}
\]

The stabiliser group makes for a compact description of $\cC$: stabilizer codes are often identified with a set $\cG \subset \cS$ that generates $\cS$. 
This set also allows us to assign a locality structure to $\cC$, through the connectivity graph.

\begin{definition}[Connectivity graph]
	\label{def:connectivity}
	Let $\cC$ be a stabilizer code on $n$ qubits with generating set $\cG$; then the connectivity graph $G = (V, E)$ associated with these stabilisers is defined as:
	\begin{enumerate}
		\item $V = [n]$, i.e. each vertex is associated with a qubit, and
		\item $(u,v) \in E$ if and only if there exists a generator $g \in \cG$ such that $u,v \in \supp(g)$.
	\end{enumerate}
    Further we denote $\bdry U \subset E$ the set of edges with exactly one endpoint in $U$.
\end{definition}

The seminary results of Bravyi, Poulin, and Terhal (BPT) \cite{bravyi2009no, bravyi2010tradeoffs} on $D$-dimensional lattices provided a framework to describe how the locality structure of the connectivity graph can be used to extract upper bounds on the parameters $k$ and $d$ of the associated code. The graph theoretic language was introduced in \cite{baspin2021connectivity} and was expanded on by later results \cite{baspin2021quantifying, baspin2023combinatorial}.

The Union Lemma is a workhorse of the aforementioned framework, and will play an equally important role in our discussion. It states that correctable subsets that are spatially separated are jointly correctable.

\begin{lemma}[Union Lemma, see \cite{bravyi2010tradeoffs} \footnote{We use the formulation of Lemma 5 in \cite{baspin2021connectivity}.}]
\label{lem:union-lemma}
    Let $G = (V,E)$ be the connectivity graph of a code $\cC$. Let $\{U_i\}_i$ be a collection of subsets of $V$. If every $U_i$ is correctable, and the subsets satisfy
    \[
    \forall i, \forall j \neq i, \partial U_i \cap U_j = \emptyset
    \]
    Then the set $T = \bigcup_i U_i $ is correctable too.
\end{lemma}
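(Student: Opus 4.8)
The plan is to prove the lemma through the standard \emph{cleaning} characterisation of correctability for stabilizer codes: a set $U \subseteq V$ is correctable if and only if it is \emph{cleanable}, meaning that every logical Pauli operator $L$ admits an equivalent representative $L' = Ls$ (for some $s \in \cS$) whose support is disjoint from $U$. I would first record this equivalence (the Cleaning Lemma, see \cite{bravyi2010tradeoffs}), in both directions, and then show that the separation hypothesis lets us clean a single logical operator off all of the $U_i$ at once, exhibiting $T$ as cleanable and hence correctable.

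The key structural point is that cleaning is a \emph{local} operation. To remove the support of $L$ from a correctable region $U$ one only needs generators $g \in \cG$ whose support meets $U$: any product of stabilizers agreeing with $L$ on $U$ can have its generators disjoint from $U$ deleted without changing its restriction to $U$. Since $\supp(g)$ is a clique in $G$, every such generator satisfies $\supp(g) \subseteq U \cup N(U)$, where $N(U)$ is the outer vertex-neighbourhood of $U$ (the non-$U$ endpoints of the edges in $\bdry U$). Hence the cleaning stabilizer $s$ for $U$ can be taken with $\supp(s) \subseteq U \cup N(U)$. The hypothesis $\bdry U_i \cap U_j = \emptyset$ for $j \neq i$ says precisely that no boundary edge of $U_i$ has an endpoint in $U_j$; thus $U_j$ meets neither $U_i$ nor its outer neighbourhood $N(U_i)$, so the stabilizer used to clean $U_i$ is supported entirely away from every other region.

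With this in hand the argument is an iteration. Fix an arbitrary logical operator $L$, process the regions $U_1, U_2, \dots$ in turn, let $s_i$ be the locally supported stabilizer that cleans $U_i$ off the current operator, and set $L' = L\prod_i s_i$. Because $\supp(s_i) \cap U_j = \emptyset$ for every $j \neq i$, multiplying by $s_i$ alters the operator only inside $U_i \cup N(U_i)$ and therefore leaves its restriction to every other $U_j$ untouched; in particular it neither spoils a region already cleaned nor disturbs one still to come, so the order of processing is irrelevant. After all steps $L'$ has trivial support on each $U_i$, whence $\supp(L') \cap T = \emptyset$. As $L$ was arbitrary and stays in its logical class throughout (we only multiply by stabilizers), $T$ is cleanable, and the Cleaning Lemma returns that $T$ is correctable.

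I expect the only genuine subtlety to be pinning down the support locality of the cleaning operation and verifying that $\bdry U_i \cap U_j = \emptyset$ is exactly the condition that decouples the cleaning steps; once the stabilizers $s_i$ have mutually non-interfering supports, the iteration is immediate. A secondary point to handle carefully is the use of both directions of the correctable/cleanable equivalence — one direction to clean each individual $U_i$, the other to promote cleanability of $T$ back to correctability.
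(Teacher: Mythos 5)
Your proof is correct, but note first that the paper never proves this lemma itself --- it is imported verbatim from the literature (the footnote points to Lemma 5 of \cite{baspin2021connectivity}) --- so the relevant comparison is with the proof given there. That proof is, in essence, a decomposition argument rather than your iterative-cleaning argument: given an element $L$ of the centralizer of $\mathcal{S}$ supported on $T=\bigcup_i U_i$, write $L=\prod_i L_i$ with $L_i$ the restriction of $L$ to $U_i$; because every generator's support is a clique of $G$, the separation hypothesis forces each generator to meet at most one $U_i$, and since commutation of Paulis is decided on the overlap of supports, each $L_i$ separately commutes with all generators; correctability of $U_i$ then gives $L_i\in\mathcal{S}$, hence $L\in\mathcal{S}$, so $T$ supports no nontrivial logical operator and is correctable. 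Your route requires two ingredients this argument avoids entirely: both directions of the equivalence ``correctable $\iff$ cleanable'', and the localization claim that the cleaning stabilizer for $U_i$ may be taken as a product of generators touching $U_i$, hence supported in $U_i\cup N(U_i)$. You correctly identify that localization as the crux, and your justification is sound: discard from the cleaning stabilizer every generator disjoint from $U_i$ (abelianness of $\mathcal{S}$ lets you split the product, and the discarded factor acts trivially on $U_i$). What your version buys is an explicit cleaned representative of every logical class on the complement of $T$, a constructive statement that is occasionally useful beyond bare correctability; the cost is length and reliance on the converse cleaning direction. One small imprecision: $\partial U_i\cap U_j=\emptyset$ yields $N(U_i)\cap U_j=\emptyset$ but does not literally force $U_i\cap U_j=\emptyset$ (it allows $U_j$ to sit in the deep interior of $U_i$); your iteration actually survives this degenerate case, since cleaning $U_i$ also cleans $U_i\cap U_j$ and leaves $U_j\setminus U_i$ untouched, but the sentence claiming $U_j$ meets neither $U_i$ nor $N(U_i)$ asserts slightly more than the hypothesis gives.
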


\section{Assouad and Nagata dimensions}

In the present discussion we are particularly interested in the `dimensionality' of the connectivity graph $G$. We want to investigate graphs of less than two dimensions, but more than one \footnote{As a reference point, the BPT bound \cite{bravyi2010tradeoffs} reads $kd^{2/(D-1)} \lesssim n$, where $D$ the dimension is an integer. Even if one could generalise it to fractional values of $D$, plugging in $1<D<2$ does not seem to forbid the existence of codes in that range. This observation emphasises the need to go beyond simply generalising the BPT bound. Similarly for the Bravyi-Terhal bound \cite{bravyi2009no}.}. Our first concern is thus to obtain a definition of dimension that is well behaved for fractional values.
Many generalisations of `dimensions' have been studied (Hausdorff, Lebesgue, etc.), however only a few are compatible with discrete metric spaces such as graphs. 

This scarcity will allow us to concentrate on two such notions. The first, the Nagata dimension\footnote{Also frequently called the Assouad-Nagata dimension.}, is an integer number and interacts very conveniently with the Union Lemma. The second, the Assouad dimension, is a real number and allows us to formally and intuitively define what constitutes a graph of `less than two dimensions'. Crucially, the Assouad dimension upper bounds the Nagata dimension.
We wield these two definitions to achieve different purposes. First, we show that a graph of Nagata dimension $1$ implies a code of bounded distance. Secondly, we leverage the fact that graphs of Assouad dimension strictly less than $2$ have Nagata dimension $1$, or less. We conclude that codes whose connectivity graphs have (Assouad) dimension strictly less than $2$ have bounded distance.

Before formally introducing these notions, we establish some elements of notation. The \emph{ball} of radius $r$ with centre $v \in V$ is the set containing all the vertices of $G$ at a distance less than or equal to $r$ of $v$. Two subsets $U', V' \subset V$ are \emph{$s$-separated} if all the vertices of $U'$ are at a distance at least $r$ of all the vertices of $B$. A collection of subsets is called $r$-separated if each distinct pair of element in it is $r$-separated.
A \emph{cover} of a subset of vertices $U \subset V$ is an ensemble of subsets $\{B_j\}_j$ such that $ U \subset \cup_j B_j$. The ensemble is called \emph{$D$-bounded} if the vertices of $B_j$ are at a distance at most $D$ of each other, for every $j$.

\begin{definition}[\cite{ledonne2015assouad}]
\label{def:assouad-dim}
    A graph $G = (V,E)$ is said to have Assouad dimension at most $\beta$ at scale $R$ with constant $C$, if for all $0<r<R$, any ball of radius $R$ in $G$ can be covered with at most $C(R/r)^\beta$ balls of radius $r$. 
\end{definition}

\begin{figure}[ht!]
		\centering
		\includegraphics[page=1, scale=.8]{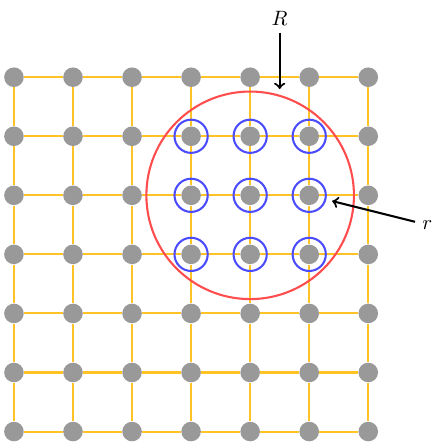}
		\caption{Assouad dimension of a 2D lattice. Here the red circle corresponds to a ball of radius $R$, while the blue circles have radius $r$. In this case, any ball of radius $R$ can be covered by at most $\propto (R/r)^2$ balls of radius $r$. Intuitively, the Assouad dimension is intimately related to the notion of growth of a graph.}
		\label{fig:assouad-dim}
	\end{figure}

Due to its intuitive nature -- see Figure \ref{fig:assouad-dim} -- the Assouad dimension makes for a very \emph{practical} metric. Any graph of `dimension less than two' should also grow slower than a 2D lattice, and therefore have Assouad dimension less than two. Having motivated the use of our first metric, we now move on to introducing the Nagata dimension.

\begin{definition}[Using the formulation of \cite{distel2023proper}]
    A graph $G = (V,E)$ is said to have Nagata dimension at most $m$ at scale $r$ with constant $c>0$ if there exist $m+1$ collections of subsets $\fB_1, \dots, \fB_{m+1}$ such that
    \begin{enumerate}
        \item $\bigcup_{i=1}^{m+1} \bigcup_{B \in \fB_i} = V$
        \item $\forall i,$ $\fB_i$ is $r$-separated
        \item $\forall i, \forall B \in \fB_i$, $B$ is $c r$-bounded 
    \end{enumerate}
\end{definition}

\begin{theorem}[Theorem 1.1 of \cite{ledonne2015assouad}]
\label{thm:assouad-to-nagata}
    Let $r_*, \beta_*, C_*$ be arbitrary real numbers, then there exists a constant $c_*>0$, and a real number $R_*$ such that the following holds:
    
    Any graph $G$ with Assouad dimension at most $\beta_*$ at scale $R_*$ with constant $C_*$, also has Nagata dimension at most $\lfloor\beta_*\rfloor$ at scale $r_*$ with constant $c_*$. 
\end{theorem}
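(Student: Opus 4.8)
The plan is to prove Theorem~\ref{thm:assouad-to-nagata}, which asserts that a graph of Assouad dimension at most $\beta_*$ (for $\beta_* < 2$, so $\lfloor \beta_* \rfloor \le 1$) has Nagata dimension at most $\lfloor \beta_* \rfloor$. I will restrict attention to the case relevant to the paper, $\lfloor \beta_* \rfloor = 1$, where we must construct two collections $\fB_1, \fB_2$ of $r$-separated, $cr$-bounded sets that together cover $V$. The guiding intuition is that a space whose balls grow slower than area $(R/r)^2$ cannot contain too many mutually separated points inside a ball, and this ``thinness'' should let us color a multiscale decomposition with only two colors.

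First I would set up a multiscale partition of $V$. Fix the scale $r$ and consider balls of radius $R_*$, chosen large relative to $r$ via the Assouad hypothesis. The key quantitative input is a \emph{packing} bound: in any ball of radius $R$, the maximum number of $r$-separated points is at most the number of radius-$r/2$ balls needed to cover it, which the Assouad condition caps at $C_*(2R/r)^{\beta_*}$. Since $\beta_* < 2$, this packing number grows strictly subquadratically in $R/r$. I would use this to control the ``multiplicity'' of any cover.

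The heart of the argument is to turn the subquadratic growth into a bound on chromatic number. The standard route (following Lang--Schlichenmaier and Le~Donne--Rajala) is to build, at a sequence of geometric scales, a maximal $r$-separated net and the Voronoi-type clusters around net points; these clusters are $O(r)$-bounded. One then forms an auxiliary ``interaction graph'' whose vertices are the clusters and whose edges join clusters that come within distance $r$ of each other. The packing bound shows this interaction graph has bounded degree --- specifically, the number of clusters within distance $O(r)$ of a fixed cluster is controlled by $(R/r)^{\beta_*}$ --- and I would argue that for $\beta_* < 2$ this degree bound together with the one-dimensional (tree-like / planar-sparse) structure forced by subquadratic growth permits a proper coloring with $\lfloor \beta_* \rfloor + 1 = 2$ colors. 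Each color class becomes one $\fB_i$: sets receiving the same color are pairwise far apart, hence $r$-separated, and each set is $cr$-bounded with $c$ depending only on $\beta_*, C_*$.

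The main obstacle I anticipate is the coloring step: bounded degree alone only gives a bound on the number of colors proportional to the packing number, not the sharp value $\lfloor \beta_* \rfloor + 1$. Getting exactly $\lfloor \beta_* \rfloor + 1$ colors is precisely the content of the Le~Donne--Rajala theorem and is genuinely delicate --- it requires exploiting that subquadratic volume growth forbids the ``dense branching'' that would force a third color, and typically proceeds by an inductive argument over scales that carefully matches separated sets from one scale to the next. I would therefore lean on the structural dichotomy in \cite{ledonne2015assouad}: one shows that failure to $2$-color at scale $r$ would produce, by iterating up through scales, a configuration of mutually separated points whose count contradicts the $(R/r)^{\beta_*}$ packing bound with $\beta_* < 2$. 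Making that contradiction quantitative, and choosing $R_*$ and $c_*$ uniformly, is where the real work lies; the rest is bookkeeping.
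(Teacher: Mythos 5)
First, a point of reference: the paper does not prove Theorem~\ref{thm:assouad-to-nagata} at all --- it is imported verbatim as Theorem~1.1 of \cite{ledonne2015assouad} and used as a black box. So the benchmark for your attempt is the Le~Donne--Rajala proof itself, and by that standard your proposal has a genuine gap rather than being a complete argument.

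The gap is exactly where you place it, the coloring step, but the situation is worse than ``delicate'': the step as you describe it is false. Bounded degree of the interaction graph only gives chromatic number at most degree plus one, and subquadratic volume growth does \emph{not} force any ``tree-like / planar-sparse'' structure on that graph. Concretely, let $G$ be a cycle on $n$ vertices: it has Assouad dimension $1$ (so $\lfloor\beta_*\rfloor = 1$) and indeed Nagata dimension $1$, yet for a maximal $r$-net with an odd number of points the Voronoi-cluster interaction graph is an odd cycle, whose chromatic number is $3$, not $2$. Hence no proper $2$-coloring of single-scale Voronoi cells can exist in general; the correct construction must instead allow cells to be merged (paying a constant factor in the $cr$-boundedness --- on the odd cycle one fuses two adjacent cells into a single $2cr$-bounded set) so that each class $\fB_i$ becomes genuinely $r$-separated. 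Controlling such amalgamation uniformly over all spaces with Assouad dimension $\beta_* < \lfloor\beta_*\rfloor + 1$ is precisely the content of \cite{ledonne2015assouad}, and your proposal does not supply it: at the critical moment you ``lean on the structural dichotomy in \cite{ledonne2015assouad}'', i.e.\ you invoke the very theorem you are trying to prove, which makes the argument circular. (Your packing bound --- at most $C_*(2R/r)^{\beta_*}$ pairwise $r$-separated points in a ball of radius $R$ --- is correct, and restricting to $\lfloor\beta_*\rfloor = 1$ is harmless for this paper's purposes; but neither closes the gap.)
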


Theorem \ref{thm:assouad-to-nagata} essentially serves as a levelling mechanism: it allows us to treat a graph with arbitrary real-valued dimension $\alpha$ as a graph of `flatter' dimension $\lfloor \alpha \rfloor$, by changing the operational meaning of `dimension'. As a matter of curiosity, one can verify that the converse of the above theorem does not hold: a tree graph has Nagata dimension one, yet its growth is exponential.

\section{Less-than-2D codes have constant distance}

In this section we prove our main result. We begin by reminding the reader of a keystone lemma in BPT. It allows us to bound $k$ as function of partitioning $G$ into correctable regions.
\begin{lemma}[BPT Lemma, Eq. 14 of \cite{bravyi2010tradeoffs} \footnote{See Lemma 19 of \cite{baspin2021connectivity} for a bound explicitly for stabilizer codes}]
\label{lem:bpt-lemma}
    Let $\cC$ be a stabilizer code with connectivity graph $G = (V,E)$. Let $A_1, A_2, A_3 \subset V$ such that $A_1 \sqcup A_2 \sqcup A_3 = V$,
    with $A_1$ and $A_2$ correctable, then the number of logical qubits $k$ satisfies
    \[
    k \leq |A_3|
    \]
\end{lemma}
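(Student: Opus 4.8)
\emph{Proof proposal.} The plan is to prove this by an entropic / decoupling argument, which has the pleasant feature of working verbatim for arbitrary quantum codes and not merely stabiliser ones. The central input is the information-theoretic reformulation of correctability. I would purify the code by adjoining a reference system $R$ of $k$ qubits maximally entangled with the logical degrees of freedom, so that the global state $\ket{\Omega}$ on $R \sqcup A_1 \sqcup A_2 \sqcup A_3$ is pure and $S(R)=k$, where $S(\cdot)$ denotes the von Neumann entropy. The key dictionary entry is that a subset $A \subset V$ is correctable precisely when no logical information leaks into $A$, i.e.\ when $R$ decouples from $A$: concretely $\rho_{RA}=\rho_R\otimes\rho_A$, equivalently $S(RA)=S(R)+S(A)$, equivalently the mutual information $I(R:A)=0$.

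First I would establish the direction of this characterisation that I actually need: correctable $\Rightarrow$ decoupled. If $A$ is correctable there is a recovery channel $\cR_A$ acting on $V\setminus A$ with $\cR_A\circ\tr_A=\mathrm{id}$ on code states; applied to $\ket{\Omega}$ this reconstructs the full purified state from $V\setminus A$ alone, which forces $\rho_{RA}$ to factorise and hence $S(RA)=S(R)+S(A)$. (The converse, via the standard decoupling theorem, is not needed here.) This is the conceptual crux; everything afterwards is a short manipulation of entropies.

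Applying the characterisation to the two correctable sets gives $S(RA_1)=S(R)+S(A_1)$ and $S(RA_2)=S(R)+S(A_2)$. Since $\ket{\Omega}$ is pure on $R\sqcup A_1\sqcup A_2\sqcup A_3$, complementary subsystems share the same entropy, so $S(RA_1)=S(A_2A_3)$ and $S(RA_2)=S(A_1A_3)$. Combining these identities yields
\[
2\,S(R) + S(A_1) + S(A_2) = S(A_2A_3) + S(A_1A_3).
\]
Subadditivity bounds the right-hand side by $S(A_1)+S(A_2)+2\,S(A_3)$, and after cancellation I am left with $S(R)\le S(A_3)$. Finally $S(A_3)\le |A_3|$, because the entropy of a region cannot exceed its qubit count, and $S(R)=k$, giving $k\le |A_3|$.

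The main obstacle is entirely in the first step: correctly translating the CPTP-map definition of correctability into the decoupling condition $I(R:A)=0$. Once that bridge is in place the remainder uses only purity, subadditivity, and the dimension bound $S(A_3)\le|A_3|$, all of which are routine. I would also remark that the argument never invokes the stabiliser structure nor any disjointness of $A_1$ and $A_2$ beyond the global partition, which explains the robustness of the bound and its applicability to general codes.
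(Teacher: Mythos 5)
Your proof is correct, but note that the paper itself never proves this lemma: it imports it wholesale by citation, and the proof in the cited source (Eq.~14 of \cite{bravyi2010tradeoffs}) is precisely the entropic decoupling argument you reconstruct --- purify the maximally mixed code state with a reference $R$, translate correctability of $A_i$ into $\rho_{RA_i} = \rho_R \otimes \rho_{A_i}$, then combine purity (complementary subsystems have equal entropy) with subadditivity to get $k = S(R) \leq S(A_3) \leq |A_3|$. So your proposal is essentially the standard proof, reproduced faithfully; the footnoted alternative (Lemma 19 of \cite{baspin2021connectivity}) gives a stabiliser-specific route via operator cleaning, and your observation that the entropic argument needs neither the stabiliser structure nor the connectivity graph is exactly the advantage of this route. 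The one step to state carefully if you wrote this out in full is the bridge you yourself identify as the crux: the paper's definition of correctability quantifies only over states supported on the code space, so you must first extend $\cR_A \circ \tr_A = \mathrm{id}$ by linearity to states entangled with $R$, and then use a Stinespring dilation of $\cR_A$ to argue that recovering the pure state $\ket{\Omega}$ on $RV'$ forces the global pure state to factorise as $\Omega_{RV'} \otimes \Phi_{AE}$, whence $\rho_{RA} = \rho_R \otimes \rho_A$; your sketch is the standard way to do this and it goes through.
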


By definition, a graph of Nagata dimension one can be covered by two collections of subsets $\fB_1, \fB_2$. Each of the subsets in $\fB_1$ have constant size, and are far apart from each other; and similarly for $\fB_2$. Applying the Union Lemma, the region covered by $\fB_1$ is correctable if the distance is sufficiently large, and so it the region covered by $\fB_2$. As the whole graph can be decomposed into two correctable regions, Lemma \ref{lem:bpt-lemma} state that $k = 0$. We formalise this argument in Lemma \ref{lem:nagata-distance-bound}.

\begin{lemma}[$m\leq 1$ Nagata dimension implies constant distance]
\label{lem:nagata-distance-bound}
    Let $\cC$ be a stabilizer code with connectivity graph $G$ and $k\geq1$. If $G$ has Nagata dimension at most $1$ at scale $r\geq 2$ for some constant $c$, then the distance $d$ of $\cC$ obeys
    \[
    d \leq b_{\max}(2c)
    \]
    Where $b_{\max}(2c)$ is the maximum size of a ball of radius $2c$ in $G$.
\end{lemma}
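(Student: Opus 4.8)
The plan is to realize the two-coloring promised by Nagata dimension $1$ as a decomposition of $V$ into two correctable regions, and then invoke the BPT Lemma (Lemma \ref{lem:bpt-lemma}) with $A_3 = \emptyset$. Since $G$ has Nagata dimension at most $1$ at scale $r$ with constant $c$, there exist two collections $\fB_1, \fB_2$ whose union covers $V$, each collection $r$-separated, and each member $cr$-bounded. Set $U_1 = \bigcup_{B \in \fB_1} B$ and $U_2 = \bigcup_{B \in \fB_2} B$, so that $U_1 \cup U_2 = V$. The strategy is to show that \emph{each} $U_i$ is correctable, whence Lemma \ref{lem:bpt-lemma} (applied with $A_1 = U_1$, $A_2 = U_2 \setminus U_1$, $A_3 = \emptyset$, say) forces $k = 0$, contradicting $k \geq 1$ unless the distance is already too small for the hypotheses to apply. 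The contrapositive is really what we want: assuming $k \geq 1$, we derive that $d$ cannot exceed $b_{\max}(2c)$.

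The key step is establishing that $U_1$ is correctable via the Union Lemma (Lemma \ref{lem:union-lemma}) applied to the family $\{B\}_{B \in \fB_1}$. This requires two things. First, each $B \in \fB_1$ must itself be correctable: since $B$ is $cr$-bounded, it is contained in a ball of radius $cr$, hence $\card{B} \leq b_{\max}(cr)$; more to the point, we need $\card{B} < d$ so that $B$ is correctable by definition of distance (any set of fewer than $d$ qubits is correctable). I would bound the diameter-based size of $B$ and compare it against $d$. Second, the separation hypothesis $\partial B \cap B' = \emptyset$ for distinct $B, B' \in \fB_1$ must be verified: because $\fB_1$ is $r$-separated, any two members are at distance at least $r$, and an edge of $G$ joins vertices lying in a common generator's support, which has bounded diameter. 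The crucial estimate is that an edge in $\partial B$ reaches only vertices within the interaction range of $B$, and as long as $r$ exceeds this range the far member $B'$ is untouched. This is where the precise relationship between $r$, the generator supports, and the $cr$-bounding constant must be pinned down.

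The main obstacle is reconciling the \emph{graph distance} used in the Nagata/Assouad definitions with the \emph{edge boundary} $\partial B$ appearing in the Union Lemma, and in particular controlling how far an edge of $G$ can ``reach.'' An edge $(u,v)$ exists precisely when $u,v$ share a generator, so an endpoint of an edge in $\partial B$ lies at graph distance $1$ from $B$; consequently the $r$-separation with $r \geq 2$ guarantees that vertices of a distinct member $B'$ (at distance $\geq r \geq 2$) cannot be endpoints of edges leaving $B$, securing $\partial B \cap B' = \emptyset$. This is exactly why the hypothesis $r \geq 2$ appears. I would make this adjacency-versus-distance bookkeeping explicit and then feed it into the Union Lemma. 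The size bound $b_{\max}(2c)$ rather than $b_{\max}(cr)$ suggests that the relevant balls are measured with a radius tied to $2c$; I would track the constants carefully to confirm that each correctable block sits inside a ball of the stated radius, so that the threshold for $d$ comes out as $b_{\max}(2c)$.

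Concretely, the proof would proceed as follows. Fix the Nagata decomposition $\fB_1, \fB_2$ at scale $r$ with constant $c$. Suppose toward the desired bound that $d > b_{\max}(2c)$; I will show $k = 0$. For each $B \in \fB_1$, the bound $\card{B} \leq b_{\max}(cr)$ together with the assumption on $d$ shows $B$ is correctable. The $r$-separation of $\fB_1$ combined with $r \geq 2$ and the adjacency analysis gives $\partial B \cap B' = \emptyset$ for all distinct pairs, so the Union Lemma makes $U_1 = \bigcup_{B \in \fB_1} B$ correctable; symmetrically $U_2$ is correctable. Taking $A_1 = U_1$, $A_2 = V \setminus U_1 \subseteq U_2$, and $A_3 = \emptyset$ in Lemma \ref{lem:bpt-lemma} yields $k \leq \card{A_3} = 0$. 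This contradicts $k \geq 1$, so in fact $d \leq b_{\max}(2c)$, as claimed.
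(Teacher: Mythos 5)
Your proposal is correct and follows essentially the same route as the paper's own proof: extract the two collections $\fB_1,\fB_2$ from the Nagata-dimension hypothesis, argue each block is small enough to be correctable, use the separation $r\geq 2$ (versus edges reaching only distance $1$) to verify $\partial B \cap B' = \emptyset$ for the Union Lemma, and then apply the BPT Lemma with $A_1 = \bigcup_{B\in\fB_1}B$, $A_2 = V\setminus A_1$, $A_3=\emptyset$ to force $k=0$, contradicting $k\geq1$. The one point where you hedge --- whether block sizes should be compared against $b_{\max}(cr)$ or $b_{\max}(2c)$ --- is handled in the paper by instantiating the decomposition as $2$-separated and $2c$-bounded (i.e.\ effectively taking the scale to be $2$, which is how the lemma is actually invoked in the main theorem via $r_*=2$), so the threshold $b_{\max}(2c)$ falls out directly.
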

\begin{proof}
    By assumption, there exist two collections of subsets $\fB_1, \fB_2$ such that
    \begin{enumerate}
        \item $\bigcup_{i=1}^{i=2} \bigcup_{B \in \fB_i} = V$
        \item $\fB_1$ and $\fB_2$ are each 2-separated
        \item Every subset in the collections $\fB_1, \fB_2$ is $2c$-bounded
    \end{enumerate}

    We assume, for the sake of contradiction, that $d>b_{\max}(2c)$. Then every $B \in \fB_1$ obeys $|B| \leq b_{\max}(2c) < d$, and we conclude that every $B$ is correctable. Further, since $\fB_1$ is $2$-separated, then for any $B,B' \in \fB_1$, we have $\partial B \cap B' = \emptyset$. Invoking the Union Lemma \ref{lem:union-lemma}, we obtain that $\bigcup_{B \in \fB_1} $ corresponds to a set of correctable qubits. The same argument applies to $\fB_2$.

    We can now write $A_1 = \bigcup_{B \in \fB_1} $, and $A_2 = \bigcup_{B \in \fB_2} \setminus A_1 $. Note that both $A_1$ and $A_2$ are correctable: removing qubits from a correctable set leaves a correctable set. As $(V \setminus A_1 ) \setminus A_2 = \emptyset$, we pick $A_3 = \emptyset$. The BPT Lemma \ref{lem:bpt-lemma} then gives $k \leq |A_3| = 0$: the code $\cC$ does not encode any logical qubits.
\end{proof}

We are now in position to state our main result. As the Nagata dimension by itself is not easy to compute in general, we use the fact that the Assouad dimension upper bound the Nagata dimension (Theorem \ref{thm:assouad-to-nagata}) to obtain an intuitive result. 

\begin{theorem}
\label{thm:main-thm}
    Let $\cC$ be a stabilizer code with connectivity graph $G$ and $k \geq 1$. If $G$ has Assouad dimension at most $\beta<2$ at scale $R\geq R_0$ with constant $C$, for some $R_0$ depending only on $\beta$ and $C$; then the distance $d$ of $\cC$ obeys

    \[
    d \leq C(4c)^\beta
    \]
    
\end{theorem}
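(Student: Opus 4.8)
The plan is to chain the two ingredients already assembled—the Assouad-to-Nagata comparison (Theorem \ref{thm:assouad-to-nagata}) and the distance bound under Nagata dimension one (Lemma \ref{lem:nagata-distance-bound})—and then convert the resulting ball-size bound $b_{\max}(2c)$ into the explicit growth estimate $C(4c)^\beta$ using the Assouad hypothesis directly. The proof is therefore essentially an assembly step; the only genuinely new work is the final conversion of a ball count into a power of the scale.

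First I would observe that $\beta < 2$ forces $\lfloor \beta \rfloor \leq 1$. Feeding $\beta_* = \beta$, $C_* = C$, and the choice $r_* = 2$ into Theorem \ref{thm:assouad-to-nagata} produces a constant $c := c_*$ and a threshold $R_0 := R_*$ (both depending only on $\beta$ and $C$) for which the Assouad hypothesis upgrades to the statement that $G$ has Nagata dimension at most $\lfloor\beta\rfloor \leq 1$ at scale $2$ with constant $c$. Since the Nagata scale $r = 2 \geq 2$, Lemma \ref{lem:nagata-distance-bound} then applies verbatim and yields $d \leq b_{\max}(2c)$.

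It remains to bound $b_{\max}(2c)$. Here I would exploit that $G$ is an unweighted graph, so distinct vertices lie at distance at least $1$ and any ball of radius strictly less than $1$ is a single vertex. Covering a ball of radius $2c$ by balls of radius $\tfrac12$ therefore forces one covering ball per enclosed vertex, so the covering number is at least the number of vertices in the ball. Applying the Assouad hypothesis at scale $2c$ with covering radius $r = \tfrac12$ bounds this number by $C\,(2c / \tfrac12)^\beta = C(4c)^\beta$; hence $b_{\max}(2c) \leq C(4c)^\beta$, and combining with $d \leq b_{\max}(2c)$ gives the stated bound. Note this is exactly where the factor of two enters: it is the ratio $R/r = 2c / \tfrac12$, not a diameter argument.

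The main subtlety—rather than a deep obstacle—is the bookkeeping of scales: the single Assouad hypothesis must simultaneously feed Theorem \ref{thm:assouad-to-nagata}, which consumes it at scale $R_*$, and the final counting step, which consumes it at scale $2c$. Since $R_*$ and $c = c_*$ are both pinned down once $\beta$ and $C$ are fixed, one must arrange $R_0$ so that the covering property is also available at scale $2c$; taking $R_0$ to be the larger of the theorem's threshold and $2c$ (equivalently, reading the Assouad property as holding at every scale at least $R_0$) resolves this cleanly. One should also record the harmless constraint $\tfrac12 < 2c$ needed to legitimately use $r = \tfrac12$ in Definition \ref{def:assouad-dim}; since a $cr$-bounded set is also $c'r$-bounded for every $c' \geq c$, we may assume $c \geq 1$ without loss, so $r = \tfrac12 < 2c$ holds automatically.
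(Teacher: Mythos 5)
Your proof is correct and follows essentially the same route as the paper: apply Theorem \ref{thm:assouad-to-nagata} with $r_* = 2$, $\beta_* = \beta$, $C_* = C$ to get Nagata dimension at most $1$ at scale $2$ with constant $c$, invoke Lemma \ref{lem:nagata-distance-bound} to obtain $d \leq b_{\max}(2c)$, and then bound $b_{\max}(2c) \leq C(4c)^\beta$ via Definition \ref{def:assouad-dim} with $r = 1/2$ (radius-$1/2$ balls being single vertices). Your additional bookkeeping about the scale at which the Assouad hypothesis is consumed (needing it at scale $2c$ as well as $R_*$) flags a detail the paper silently glosses over, but it does not change the substance of the argument.
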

\begin{proof}
    Applying Theorem \ref{thm:assouad-to-nagata} with $r_* = 2, \beta_* = \beta, C_* = C$ guarantees that $G$ has Nagata dimension at most $1$ at scale $2$ with some constant $c \equiv c_*$. Invoking Lemma \ref{lem:nagata-distance-bound}, we obtain that $d \leq b_{\max}(2c)$. Since the Assouad dimension is bounded, we can also bound $b_{\max}$. Picking $r=1/2$, note that balls of radius $1/2$ in a graph correspond to a single vertex. Definition \ref{def:assouad-dim} then guarantees that $b_{\max}$ is at most $C(\frac{2c}{1/2})^\beta = C(4c)^\beta$, hence the stated result.
\end{proof}
\section{Open Questions}
Quantum codes seem to obey a very distinctive phenomenon of `dimensional threshold'. The only codes known to be self-correcting live in $\geq 4D$ \cite{alicki2008thermal}, the only known codes to have a macroscopic energy barrier live in $>2D$ \cite{bravyi2009no}, and we have now proved that the only codes with growing distance live in $\geq 2D$.
 It would be interesting to investigate whether the notions of Assouad/Nagata dimensions can provide answers as to the current lack of self-correcting codes in less than 4D. Do codes in $<4D$ necessarily contain string operators?

\section*{Acknowledgements}
% The author would like to thank Anthony Leverrier, Chinmay Nirkhe, Anirudh Krishna for many insightful comments.

N.\,B. is supported by the Australian Research Council via the Centre of Excellence in Engineered Quantum Systems (EQUS) project number CE170100009, and by the Sydney Quantum Academy.

\newpage

\printbibliography

% \begin{appendix}
% \huge
% \noindent\textbf{Appendices}
% \normalsize

\end{document}